\newfont{\bbb}{msbm10 scaled 500}
\newfont{\bb}{msbm10 scaled 1100}
\newcommand{\EE}{\mbox{\bb E}}
\newcommand{\Prob}{\textrm{Pr}}
\newcommand{\Mc}{{\cal M}}
\newcommand{\Xc}{{\cal X}}
\newcommand{\Yc}{{\cal Y}}
\newtheorem{theorem}{Theorem}
\newtheorem{lemma}[theorem]{Lemma}%[chapter]
\title{Expansion Coding: Achieving the Capacity of an AEN Channel}
\author{O.~Ozan~Koyluoglu, Kumar~Appaiah, Hongbo~Si, 
and Sriram~Vishwanath
\thanks{The authors are with the
Laboratory for Informatics, Networks, and Communications (LINC),
Wireless Networking and Communications Group (WNCG),
The University of Texas at Austin,
1 Univerity Station, C0806, Austin, TX 78712.
Email: \{ozan,a.kumar,sihongbo\}@mail.utexas.edu, sriram@austin.utexas.edu.}
\thanks{A version of this report is submitted to
IEEE for a possible publication, after which this version 
may become obsolete.}
}
\begin{document}

\maketitle

%%%%%%%%%%%%%%%%%%%%%%%%%%%%%%%%%%%%%%%%%%%%%%%%%%%%%%%%%%%%%%%%%%%%%%%%%%%%%%
%%%%%%%%%%%%%%%%%%%%%%%%%%%%%%%%%%%%%%%%%%%%%%%%%%%%%%%%%%%%%%%%%%%%%%%%%%%%%%

\begin{abstract}
A general method of coding over expansions is proposed, which allows one
to reduce the highly non-trivial problem of coding over continuous channels
to a much simpler discrete ones. More specifically, the focus is on the
additive exponential noise (AEN) channel, for which the (binary) expansion of the
(exponential) noise random variable is considered. It is
shown that each of the random variables in the expansion corresponds to
independent Bernoulli random variables. Thus, each of the expansion levels 
(of the underlying channel) corresponds to a binary symmetric channel (BSC), 
and the coding problem is reduced to coding over these parallel channels 
while satisfying the channel input constraint. This optimization formulation 
is stated as the achievable rate result, for which a specific choice of input 
distribution is shown to achieve a rate which is  arbitrarily close to the 
channel capacity in the high SNR regime. Remarkably, the scheme allows
for low-complexity capacity-achieving codes for AEN channels, using
the codes that are originally designed for BSCs. Extensions to different
channel models and applications to other coding problems are discussed.
\end{abstract}

%%%%%%%%%%%%%%%%%%%%%%%%%%%%%%%%%%%%%%%%%%%%%%%%%%%%%%%%%%%%%%%%%%%%%%%%%%%%%%
%%%%%%%%%%%%%%%%%%%%%%%%%%%%%%%%%%%%%%%%%%%%%%%%%%%%%%%%%%%%%%%%%%%%%%%%%%%%%%

\section{Introduction}

In this work, we propose the method of constructing (binary) expansions of 
discrete-time continuous alphabet channels, and coding over the resulting 
set of parallel channels. We apply this \emph{coding over expansions} method 
to additive exponential noise (AEN) channels, where the signal and noise 
terms constructing the channel output are represented with their corresponding 
binary digits. Focusing on the additive exponential noise component, we 
show that the binary expansion of the noise consists of independent 
Bernoulli distributed random variables at each level. (The mean of each
random variable is a function of their level number in the expansion and 
the mean of the underlying exponential noise.) This way, thanks to the 
expansion technique, we show that continuous alphabet channels can 
be considered as a set of parallel binary symmetric channels (BSCs).

Instead of coding for every level in the expansion, we consider signaling
over a finite number of levels, and we resolve the problem arising from
carryovers either by considering them as noise, or by decoding them
the least significant bit onwards to the most
significant bit. For each case, we state the corresponding achievable
rate as an optimization problem, where the rate is maximized over
the choices of the Bernoulli distributions for the signal transmission over
each level with the constraint that that the combined random variables satisfy
the channel input constraint. Then, utilizing an approximately optimal input
distribution, we show that one can celebrate the achievability of an
$\epsilon$-gap to capacity result in the high SNR regime. This method
together with capacity-achieving low-complexity codes (such as polar
coding) allows one to achieve the capacity of the AEN in the high SNR regime.

The additive exponential noise (AEN) channel is of particular interest as it 
models worst-case noise given a mean and a non-negativity constraint on 
noise \cite{Verdu:Exponential96}. In addition, the AEN model naturally 
arises in non-coherent communication settings, and in optical communication 
scenarios. (We refer to \cite{Verdu:Exponential96} and 
\cite{Martinez:Communication11} for an extensive discussion on the 
AEN channel.) Verd{\'u} derived the optimal input distribution and the capacity
of the AEN channel in \cite{Verdu:Exponential96}. Martinez, on the other 
hand, proposed the pulse energy modulation scheme, which can be seen as a 
generalization of amplitude modulation for the Gaussian channels. 
In this scheme, the constellation symbols are chosen as $c (i-1)^l$, 
for $i=1, \cdots, 2^M$ with a constant $c$, and it is shown that 
the information rates obtained from this constellation can achieve an 
energy (SNR) loss of $0.76$ dB (with the best choice of 
$l=\frac{1}{2}(1+\sqrt{5})$) compared to the capacity in the high SNR 
regime. Another constellation technique for this coded modulation
approach is recently considered in \cite{LeGoff:Capacity11}, where
it is shown that log constellations are designed such that
the real line is divided into ($2M-1$) equally probable intervals.
$M$ of the centroids of these intervals are chosen as constellation
points, and, by a numerical computation of the mutual information, it
is shown that these constellations can achieve within a $0.12$ dB SNR gap
in the high SNR regime. Our approach, which achieves arbitrarily close to 
the capacity of the channel, outperforms these previously proposed
modulation techniques.

The rest of the paper is organized as follows. The next section describes the
AEN channel model. In Section~\ref{sec:BinExp}, we present the key
Lemma which shows that one independent Bernoulli distributed random
variables occur in the binary expansion of an exponential random variable.
Armed with this result, Section~\ref{sec:EM} develops the expansion
modulation technique, where we state the main results of the paper.
Numerical results are provided in Section~\ref{sec:NumRes}, and the paper
concludes with a discussion section (Section~\ref{sec:Discuss}).

%%%%%%%%%%%%%%%%%%%%%%%%%%%%%%%%%%%%%%%%%%%%%%%%%%%%%%%%%%%%%%%%%%%%%%%%%%%%%%
%%%%%%%%%%%%%%%%%%%%%%%%%%%%%%%%%%%%%%%%%%%%%%%%%%%%%%%%%%%%%%%%%%%%%%%%%%%%%%

\section{Channel model and background}

We consider the additive exponential noise (AEN) channel given by
\begin{equation}\label{eq:AENChannel}
Y_i=X_i+N_i, i=1,\cdots,n,
\end{equation}
where $N_i$ are independently and identically distributed
according to the additive noise with an exponential density with mean $E_N$;
i.e., omitting the index $i$, the noise has the following density:
\begin{equation}
\Prob\{N=n\} = \frac{1}{E_N} e^{-\frac{n}{E_N}} u(n),
\end{equation}
where $u(n)=1$ for $n\geq 0$ and $u(n)=0$ otherwise.

The transmitter conveys one of the messages, $m$, which is
uniformly distributed in $\Mc$, i.e., the random message
$m\in\Mc\triangleq \{1, \cdots, 2^{nR}\}$; and it does so by
mapping the message to the channel input using the encoding
function $f(\cdot):\Mc \to \Xc^n$, where $X_1^n(m)=f(m)$, under
the constraints that $\Xc=\Re$ and
\begin{equation}
\frac{1}{n} \EE\big[\sum\limits_{i=1}^n X_i \big] \leq E_X,
\end{equation}
where $E_X$ is the maximum average energy.

The decoder uses the decoding function $g(\cdot)$ to map its channel
observations to an estimate of the message. Specifically,
$g(\cdot):\Yc^n \to \Mc$, where the estimate is denoted
by $\hat{M} \triangleq g(Y^n)$.

The rate $R$ is said to be achievable, if the average
probability of error defined by
\begin{equation}
P_e \triangleq \frac{1}{|\Mc|} \sum\limits_{m\in \Mc}
\Prob\{g(Y^n)\neq m | m \textrm{ is sent. }\}
\end{equation}
can be made small for large $n$.
The capacity of AEN is denoted by $C$, which is
the maximum achievable rate $R$.

The capacity of AEN is given by \cite{Verdu:Exponential96}, where
\begin{equation}
C = \log(1+\textrm{SNR}),
\end{equation}
where $\textrm{SNR}=\frac{E_X}{E_N}$, and the capacity
achieving input distribution is given by
\begin{equation}\label{eq:OptInput}
p^*(x) = \left(\frac{E_X}{(E_X+E_N)^2} e^{\frac{-x}{E_X+E_N}}
+ \frac{E_N}{(E_X+E_N)} \delta(x)\right)u(x),
\end{equation}
where $\delta(x)=1$ if $x=0$, and $0$ otherwise.
Note that this is the $p^*(x)=\arg \max\limits_{p(x)} I(X;Y)$,
where $p(y|x)$ is given by the AEN channel \eqref{eq:AENChannel}.

Surprisingly, while the capacity achieving input distribution for
the additive white Gaussian noise (AWGN) channel is Gaussian, here
the optimal input distribution is not exponentially distributed.
However, we observe that in the high SNR regime, the optimal
distribution gets closer to an exponential distribution with mean
$E_X+E_N$.

%%%%%%%%%%%%%%%%%%%%%%%%%%%%%%%%%%%%%%%%%%%%%%%%%%%%%%%%%%%%%%%%%%%%%%%%%%%%%%
%%%%%%%%%%%%%%%%%%%%%%%%%%%%%%%%%%%%%%%%%%%%%%%%%%%%%%%%%%%%%%%%%%%%%%%%%%%%%%

\section{Exponential Distribution: Binary Expansion}\label{sec:BinExp}

We show the following lemma, which allows us to have independent
Bernoulli random variables in the binary expansion of an exponential
random variable.
\begin{lemma}\label{lem:exp}
Let $B_l$'s be independent Bernoulli random variables with parameters
given by $p_l$, i.e., $\Prob\{B_l=1\}=p_l$ and $\Prob\{B_l=0\}=1-p_l$,
and consider the random variable defined by
\begin{equation}
B=\sum\limits_{l=-\infty}^{\infty} 2^l B_l.
\end{equation}
Then, the choice of
\begin{equation}
p_l = \frac{1}{1+e^{\lambda 2^l}},
\end{equation}
implies that the random variable $B$ is exponentially distributed
with mean $\lambda^{-1}$, i.e.,
\begin{equation}
\Prob\{B=b\} = \lambda e^{-\lambda b} u(b).
\end{equation}
\end{lemma}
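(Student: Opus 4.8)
The plan is to identify the law of $B$ through its Laplace transform (moment generating function) $\Phi(s)=\EE[e^{-sB}]$ for $s\ge 0$, and then to invoke uniqueness of the Laplace transform for distributions on $[0,\infty)$: since the Laplace transform of an exponential variable of mean $\lambda^{-1}$ equals $\frac{\lambda}{\lambda+s}$, it suffices to prove $\Phi(s)=\frac{\lambda}{\lambda+s}$. Because the $B_l$ are independent, $\Phi$ factors as $\Phi(s)=\prod_{l=-\infty}^{\infty}\big[(1-p_l)+p_l e^{-s2^l}\big]$, so the whole task reduces to evaluating this doubly-infinite product. Before that I would record two preliminaries: first, that $p_l\le e^{-\lambda 2^l}$ is summable over $l\ge 0$, so by Borel--Cantelli only finitely many high-order bits are nonzero almost surely and $B<\infty$ a.s.; second, that it is convenient to split $B=I+G$ into an integer part $I=\sum_{l\ge 0}2^l B_l\in\{0,1,2,\dots\}$ and a fractional part $G=\sum_{l<0}2^l B_l\in[0,1)$, which are independent since they depend on disjoint families of the $B_l$.

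For the integer part the computation is elementary and exact. Using $1-p_l=\frac{e^{\lambda 2^l}}{1+e^{\lambda 2^l}}$ and the odds ratio $\frac{p_l}{1-p_l}=e^{-\lambda 2^l}$, for any $k$ with binary digits $\beta_l$ (so $k=\sum_{l\ge 0}2^l\beta_l$) the factorized probability is $\Prob\{I=k\}=\big(\prod_{l\ge0}(1-p_l)\big)\prod_{l\ge0}e^{-\lambda 2^l\beta_l}=C\,e^{-\lambda k}$, where $C=\prod_{l\ge0}(1-p_l)$. Normalization over $k\ge 0$ forces $C=1-e^{-\lambda}$, so $I$ is geometric with $\Prob\{I=k\}=(1-e^{-\lambda})e^{-\lambda k}$ and $\EE[e^{-sI}]=\frac{1-e^{-\lambda}}{1-e^{-(\lambda+s)}}$. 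The pleasant point is that the binary representation of $k$ makes the product of odds ratios collapse to $e^{-\lambda k}$ with no limiting argument.

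The main obstacle is the fractional part $G$, an honest infinite sum of weighted Bernoullis producing a continuous law; here I would evaluate $\EE[e^{-sG}]=\prod_{l\ge 1}\frac{e^{\lambda 2^{-l}}+e^{-s2^{-l}}}{1+e^{\lambda 2^{-l}}}$ by a telescoping trick built on the identity $(e^a+e^b)(e^a-e^b)=e^{2a}-e^{2b}$. Writing $\sigma_l=e^{\lambda 2^{-l}}+e^{-s2^{-l}}$ and $\delta_l=e^{\lambda 2^{-l}}-e^{-s2^{-l}}$, the identity gives $\sigma_l\delta_l=\delta_{l-1}$, hence $\prod_{l=1}^N\sigma_l=\delta_0/\delta_N$; doing the same with $s=0$ for the denominators and letting $N\to\infty$ (using $\delta_N\sim(\lambda+s)2^{-N}$ and its $s=0$ analogue $\sim\lambda 2^{-N}$) yields $\EE[e^{-sG}]=\frac{\lambda(e^\lambda-e^{-s})}{(e^\lambda-1)(\lambda+s)}$, which is exactly the Laplace transform of the truncated-exponential density $\frac{\lambda e^{-\lambda x}}{1-e^{-\lambda}}$ on $[0,1)$. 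I expect the care needed here — convergence of the infinite product and the boundary asymptotics as $N\to\infty$ — to be the only genuinely delicate step.

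Finally I would multiply the two pieces. Using $e^\lambda-1=e^\lambda(1-e^{-\lambda})$ and $e^{-\lambda}(e^\lambda-e^{-s})=1-e^{-(\lambda+s)}$, the product $\EE[e^{-sI}]\,\EE[e^{-sG}]$ collapses to $\frac{\lambda}{\lambda+s}$, establishing $\Phi(s)=\frac{\lambda}{\lambda+s}$ and hence that $B$ is exponential with mean $\lambda^{-1}$ by transform uniqueness. Equivalently, and perhaps more transparently, one can dispense with transforms at the end: since $I$ is geometric and $G$ has the truncated-exponential density and the two are independent, the density of $B$ on each interval $[k,k+1)$ is $\Prob\{I=k\}\cdot\frac{\lambda e^{-\lambda(x-k)}}{1-e^{-\lambda}}=\lambda e^{-\lambda x}$, which is precisely the desired exponential density on all of $[0,\infty)$.
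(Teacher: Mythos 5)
Your proof is correct, and it shares the paper's overall skeleton---identify the law of $B$ via a transform, factor over the independent levels, split at $l=0$, and telescope---but the execution differs in a way worth noting. The paper works with the moment generating function $\EE[e^{tB}]$ for $t<\lambda$ (your Laplace variable is just $s=-t$) and evaluates \emph{both} half-products algebraically, via the identity $\prod_{l=0}^{n}(1+e^{\alpha 2^l}) = \frac{1-e^{2^{n+1}\alpha}}{1-e^{\alpha}}$ and its negative-index analogue, taking limits in each. Your treatment of the fractional part $G=\sum_{l<0}2^l B_l$ is the same telescoping in disguise: your relation $\sigma_l\delta_l=\delta_{l-1}$ is exactly the difference-of-squares collapse underlying the paper's product formula. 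What is genuinely different is the integer part: instead of evaluating the product over $l\geq 0$, you identify the law of $I=\sum_{l\geq 0}2^l B_l$ outright as geometric by the odds-ratio and normalization argument, which needs no limiting computation and, in passing, proves the product identity $\prod_{l\geq 0}(1-p_l)=1-e^{-\lambda}$ rather than assuming it. This buys two things the paper's proof does not make explicit: the structural fact that $B$ decomposes into independent geometric and truncated-exponential components (the classical integer/fractional decomposition of an exponential variable), and a transform-free finish by multiplying densities on each interval $[k,k+1)$, which avoids invoking transform uniqueness altogether. You also close a point the paper leaves implicit: almost-sure finiteness of $B$, obtained from Borel--Cantelli since $\sum_{l\geq 0}p_l \leq \sum_{l\geq 0}e^{-\lambda 2^l}<\infty$, which is what legitimizes treating $B$ as a bona fide random variable and factoring its transform. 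The paper's route is shorter to write down; yours is more probabilistic and yields more information about the structure of the limiting law.
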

\begin{proof}
The proof follows by extending the one given in
\cite{Marsaglia:Random71}. We show the result by calculating the
moment generating function of $B$.
Using the assumption that $\{B_l\}_{l\in\mathbb{Z}}$ are
mutually independent, we have
\begin{eqnarray}
M_B(t)  =\EE[e^{tB}]
        =\EE\left[e^{t\sum\limits_{l=-\infty}^{\infty}2^l B_l}\right]
        =\prod_{l=-\infty}^{\infty}\EE\left[e^{t2^l B_l}\right].
\end{eqnarray}
Note that for any $l\in\mathbb{Z}$,
\begin{eqnarray}
\EE\left[e^{t2^l B_l}\right]&=&
\frac{e^{t2^l}}{1+e^{\lambda 2^l}}+\left(1-\frac{1}{1+e^{\lambda 2^l}}\right)
=\frac{e^{t 2^l}+e^{\lambda 2^l}}{1+e^{\lambda 2^l}}\nonumber\\
&=&\frac{1+e^{(t-\lambda ) 2^l}}{1+e^{-\lambda 2^l}}.
\end{eqnarray}
Then, using the fact that for any constant $\alpha\in\mathbb{R}$,
\begin{eqnarray}
\prod_{l=0}^{n}(1+e^{\alpha 2^l})=\frac{1-e^{2^{n+1}\alpha}}{1-e^{\alpha}},
\end{eqnarray}
we can obtain the following for $t<\lambda$,
\begin{eqnarray}
\prod_{l=0}^{\infty} \EE\left[e^{t2^l B_l}\right]
&=&\lim_{n\rightarrow\infty}\prod_{i=0}^{n}\frac{1+e^{(t-\lambda ) 2^l}}{1+e^{-\lambda 2^l}}\nonumber\\
&=&\frac{1-e^{-\lambda}}{1-e^{t-\lambda}}\nonumber\\
&=&\lim_{n\rightarrow\infty}\frac{1-e^{(t-\lambda)2^{n+1}}}{1-e^{t-\lambda}}\frac{1-e^{-\lambda}}
{1-e^{-\lambda2^{n+1}}}\nonumber\\
&=&\frac{1-e^{-\lambda}}{1-e^{t-\lambda}}.\label{eqn:part1}
\end{eqnarray}
And, similarly, for the negative part, we have
\begin{eqnarray}
\prod_{l=-n}^{-1}(1+e^{\alpha 2^l})=\frac{1-e^{\alpha}}{1-e^{\alpha2^{-n}}},
\end{eqnarray}
which further implies that
\begin{eqnarray}
\prod_{i=-\infty}^{-1}\EE\left[e^{t2^i B_l}\right]
&=&\lim_{n\rightarrow\infty}\frac{1-e^{t-\lambda}}{1-e^{(t-\lambda)2^{-n}}}\frac{1-e^{-\lambda2^{-n}}}{1-e^{-\lambda}}\nonumber\\
&=&\frac{\lambda(1-e^{t-\lambda})}{(\lambda-t)(1-e^{-\lambda})}.
\label{eqn:part2}
\end{eqnarray}
Thus, finally for any $t<\lambda$, multiplying equations~\ref{eqn:part1} and~\ref{eqn:part2}, we get
\begin{eqnarray}
M_B(t)&=&\frac{\lambda}{\lambda-t}.
\end{eqnarray}
The observation that this is the moment generation function 
for an exponentially distributed random variable with parameter 
$\lambda$ concludes the proof.
\end{proof}

%%%%%%%%%%%%%%%%%%%%%%%%%%%%%%%%%%%%%%%%%%%%%%%%%%%%%%%%%%%%%%%%%%%%%%%%%%%%%%
%%%%%%%%%%%%%%%%%%%%%%%%%%%%%%%%%%%%%%%%%%%%%%%%%%%%%%%%%%%%%%%%%%%%%%%%%%%%%%

\section{Modulation with expansions}\label{sec:EM}

Our proposed coding scheme consists of coding over expansion,
referred to as expansion modulation (EM), and coding for the resulting
parallel binary symmetric channels (BSCs). More specifically,
EM refers to a modulation over the binary expansion of the channel
over levels ranging from $-L_1$ to $L_2$, i.e.,
\begin{equation}\label{eq:EMChannel}
Y_i \triangleq \sum\limits_{l=-L_1}^{L_2} 2^{l} Y_{i,l}
= \sum\limits_{l=-L_1}^{L_2} 2^{l} (X_{i,l}+N_{i,l}),
\end{equation}
where the expansions of the signal and noise are given by
\begin{equation}
X_i \triangleq \sum\limits_{l=-L_1}^{L_2} 2^{l} X_{i,l}, \quad
\textrm{ and }
N_i \triangleq \sum\limits_{l=-L_1}^{L_2} 2^{l} N_{i,l}.
\end{equation}
When we take the limit $L_1,L_2\to\infty$, the channel given
by \eqref{eq:EMChannel} corresponds to the one given by
\eqref{eq:AENChannel}.

We propose coding over the expansion levels of this channel.
More specifically, the least significant bit channel is given by
\begin{equation}
Y_{i,-L_1} = X_{i,-L_1} \oplus N_{i,-L_1}, i=1,\cdots N.
\end{equation}
A capacity achieving BSC code is utilized over this channel with
input probability distribution given by $p_{-L_1}$.
(For example, the polar coding method~\cite{Arikan:Channel08},
allows one to construct capacity achieving codes for the $l=-L_1$
level channel.) Instead of directly using the capacity achieving
code design, we use the combination of the capacity achieving
code and the method of Gallager~\cite{Gallager:Information68} to
achieve a rate corresponding to the one obtained by the mutual
information $I(X_l;Y_l)$ evaluated with an input distribution
Bernoulli with parameter $p_l$.(The desired distributions
will be made clear in the following part.)

Noting that the sum is a modulo-$2$ sum in the above channel,
there will be carryovers from this sum to the next level,
$l=-L_1+1$. Denoting the carryover seen at level $l$ as
$C_{i,l}$, the remaining channels can be represented with
with the following
\begin{equation}\label{eq:ParCh}
Y_{i,l} = X_{i,l} \oplus \tilde{N}_{i,l}, \quad i=1,\cdots n,
\end{equation}
where the effective noise, $\tilde{N}_{i,l}$,
is a Bernoulli random variable obtained
by the convolution of the noise and the carryover
$\tilde{q}_{l} \triangleq \Prob\{ \tilde{N}_{i,l}=1\}
= q_{l} \otimes c_{l}
\triangleq q_l(1-c_l)+c_l(1-q_l)$ with
$q_{l} \triangleq \Prob\{ N_{i,l}=1\}$
and
$c_{l} \triangleq \Prob\{ C_{i,l}=1\}$.
Here, the carry over probability is given by
\begin{equation}
c_{l} = p_{l-1} \tilde{q}_{l-1}, l\in\{-L_1+1,\cdots, L_2\}.
\end{equation}

Due to Lemma~\ref{lem:exp}, the noise seen at each level
will be described by independent Bernoulli random variables, and therefore,
our coding scheme will be over the parallel channels given by
\eqref{eq:ParCh} for $l=-L_1,\cdots,L_2$, where
$\tilde{q}_{l} \triangleq \Prob\{ \tilde{N}_{i,l}=1\} =
q_{l} \otimes (p_{l-1} \tilde{q}_{l-1})$.

%%%%%%%%%%%%%%%%%%%%%%%%%%%%%%%%%%%%%%%%%%%%%%%%%%%%%%%%%%%%%%%%%%%%%%%%%%%%%%
%%%%%%%%%%%%%%%%%%%%%%%%%%%%%%%%%%%%%%%%%%%%%%%%%%%%%%%%%%%%%%%%%%%%%%%%%%%%%%

\subsection{Considering carryovers as noise}

Using a capacity achieving code for BSCs,
combined with the Gallager's method, expansion modulation
readily achieves the following result.
\begin{theorem}\label{thm:1}
Expansion modulation, when implemented with capacity achieving
codes for the resulting BSCs, achieves the rate given by
\begin{equation}
R_1=\sum\limits_{-L_1}^{L_2} H(p_l\otimes \tilde{q}_l) - H(\tilde{q}_l),
\end{equation}
for any $L_1,L_2>0$, where
$\tilde{q}_{l}=q_{l} \otimes (p_{l-1} \tilde{q}_{l-1})$
for $l>-L_1$. To satisfy the energy constraint, $p_l\in[0,0.5]$
is chosen such that
\begin{equation}
\frac{1}{n} \EE\left[\sum\limits_{i=1}^n X_i \right]
 = \frac{1}{n} \sum\limits_{i=1}^n \sum\limits_{l=-L_1}^{L_2} 2^{l} p_l
\leq E_X.
\end{equation}
\end{theorem}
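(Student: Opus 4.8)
The plan is to reduce the theorem to a parallel collection of independent binary-symmetric channel (BSC) coding problems and to add up their individual contributions. By Lemma~\ref{lem:exp}, for each channel use $i$ the noise bits $\{N_{i,l}\}_l$ are independent Bernoulli random variables with $q_l=\Prob\{N_{i,l}=1\}$, and these are i.i.d.\ across $i$ since the $N_i$ are i.i.d. After absorbing the carryover $C_{i,l}$ into the noise as described preceding the theorem, level $l$ reads $Y_{i,l}=X_{i,l}\oplus\tilde N_{i,l}$ with $\tilde q_l=\Prob\{\tilde N_{i,l}=1\}=q_l\otimes(p_{l-1}\tilde q_{l-1})$ for $l>-L_1$ and $\tilde q_{-L_1}=q_{-L_1}$. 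The first thing I would verify is that, under this as-noise treatment, each level really is a memoryless BSC: the carry $C_{i,l}$ is a function only of the signal and noise bits at levels strictly below $l$ (for the same $i$), so $\tilde N_{i,l}$ is independent of the current-level input $X_{i,l}$; and it is i.i.d.\ across $i$ provided the lower-level inputs are i.i.d.\ across $i$, which holds for the i.i.d.\ random code used to prove achievability.

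With the memoryless BSC at level $l$ in hand, I would fix $X_{i,l}\sim\mathrm{Bernoulli}(p_l)$ and evaluate the single-letter mutual information. Independence of $X_{i,l}$ and $\tilde N_{i,l}$ gives $Y_{i,l}\sim\mathrm{Bernoulli}(p_l\otimes\tilde q_l)$, hence $H(Y_{i,l})=H(p_l\otimes\tilde q_l)$, while $H(Y_{i,l}\mid X_{i,l})=H(\tilde N_{i,l})=H(\tilde q_l)$, so that
\begin{equation}
I(X_{i,l};Y_{i,l})=H(p_l\otimes\tilde q_l)-H(\tilde q_l).
\end{equation}
A capacity-achieving BSC code is designed for the uniform input, so to realize the prescribed (generally non-uniform) input $\mathrm{Bernoulli}(p_l)$ I would invoke the method of Gallager~\cite{Gallager:Information68}, which, layered on top of any capacity-achieving BSC code such as the polar codes of~\cite{Arikan:Channel08}, achieves every rate below $I(X_{i,l};Y_{i,l})$ with error probability vanishing as $n\to\infty$.

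Finally, because the $L_1+L_2+1$ levels are used simultaneously and decoded independently, their achievable rates add, and summing the per-level expression gives
\begin{equation}
R_1=\sum_{l=-L_1}^{L_2}\big(H(p_l\otimes\tilde q_l)-H(\tilde q_l)\big),
\end{equation}
as claimed. For the input constraint, the transmitted symbol is $X_i=\sum_{l=-L_1}^{L_2}2^l X_{i,l}$ with $X_{i,l}\sim\mathrm{Bernoulli}(p_l)$, so $\EE[X_i]=\sum_{l=-L_1}^{L_2}2^l p_l$; I would restrict $p_l\in[0,0.5]$---which costs no rate, since $I(X_l;Y_l)$ is invariant under $p_l\mapsto 1-p_l$ while the energy is smaller for $p_l\le 0.5$---and impose $\sum_l 2^l p_l\le E_X$ to meet the average-energy constraint.

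The step I expect to be the main obstacle is the carryover-as-noise justification underlying the first paragraph. The carryovers couple the levels in two ways: through the recursive dependence of $\tilde q_l$ on the lower-level parameters, and through the statistical correlation between the carry entering level $l$ and the symbols transmitted below it. The proof deliberately breaks this correlation by declaring the carry to be an independent noise contribution at each level---valid for achievability but lossy, since the information the carry conveys about the lower levels is discarded (this is exactly the loss that the alternative scheme decoding from the least significant bit upward recovers). Making this rigorous requires showing that, under the chosen code ensemble, $\tilde N_{i,l}$ is genuinely independent of $X_{i,l}$ and i.i.d.\ across $i$, so that the single-letter $I(X_{i,l};Y_{i,l})$ above is in fact achievable level by level.
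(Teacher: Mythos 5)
Your proposal is correct and takes essentially the same route as the paper: the paper gives no separate proof of Theorem~\ref{thm:1}, treating it as immediate from the construction described just before it (parallel BSCs with the carryover absorbed into the effective noise $\tilde q_l$, a capacity-achieving BSC code combined with Gallager's method to realize the Bernoulli$(p_l)$ inputs at rate $H(p_l\otimes\tilde q_l)-H(\tilde q_l)$ per level, and the per-level rates summed subject to the energy constraint). Your writeup simply makes explicit the independence and i.i.d.-across-channel-uses justifications that the paper leaves implicit.
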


The optimization problem stated in the result above is highly
non-trivial. However, utilizing the optimal input distribution,
(\ref{eq:OptInput}), one can adopt the following approximate distributions
in Theorem~\ref{thm:1}. At a high SNR, we observe that the optimal input
distribution is approximated by an exponential distribution. Then, one
can simply choose $p_l$ from the binary expansion of the exponential
distribution with mean $E_X+E_N$. To satisfy the power constraint, we
use coding only for $\frac{E_X}{E_X+E_N}$ of the time,
and for the rest we set the channel input to
$0$. (As a second approach, in the numerical results, we compare this
choice with the one of choosing $p_l$ from the binary expansion
of the exponential distribution with mean $E_X$.)
The next method gives a better rate result with a minimal increase
in complexity.

%%%%%%%%%%%%%%%%%%%%%%%%%%%%%%%%%%%%%%%%%%%%%%%%%%%%%%%%%%%%%%%%%%%%%%%%%%%%%%
%%%%%%%%%%%%%%%%%%%%%%%%%%%%%%%%%%%%%%%%%%%%%%%%%%%%%%%%%%%%%%%%%%%%%%%%%%%%%%

\subsection{Decoding carryovers}

In the scheme above, let us consider decoding starting from the level
$l=-L_1$. The receiver will obtain the correct $X_{i,-L_1}$ for
$i=1,\cdots,n$. As the receiver has the knowledge of $Y_{i,-L_1}$ for
$i=1,\cdots,n$, it will also have the knowledge of
the correct noise sequence $N_{i,-L_1}$ for $i=1,\cdots,N$.
With this knowledge, the receiver can directly obtain
$C_{i,-L_1+1}$ for $i=1,\cdots,N$, which is the carryover from
level $l=-L_1$ to level $l=-L_1+1$. Using this carryover sequence
in decoding at level $l=-L_1+1$, the receiver can get rid of
carryover noise. Thus, the effective channel that
the receiver will see can be represented by
\begin{equation}
Y_{i,l} = X_{i,l} \oplus N_{i,l}, \quad i=1, \cdots, n,
\end{equation}
for $l=-L_1,\cdots,L_2$.
We remark that with this decoding strategy the effective channels
will no longer be a set of independent parallel channels,
as decoding in one level affects the channels at higher levels.
However, if the utilized coding method is strong enough (e.g.,
if the error probability decays to $0$ exponentially with $n$),
then this carryover decoding error can be made insignificant
by increasing $n$ for a given number of levels (here, $L_1+L_2+1$).
We state the rate resulting from this approach.

\begin{theorem}\label{thm:2}
Expansion modulation, by decoding the carryovers, achieves
the rate given by
\begin{equation}
R_2=\sum\limits_{-L_1}^{L_2} H(p_l\otimes q_l) - H(q_l),
\end{equation}
for any $L_1,L_2>0$, where $p_l\in[0,0.5]$ is chosen to
satisfy
\begin{equation}
\frac{1}{n} \EE\big[\sum\limits_{i=1}^n X_i \big]
 = \frac{1}{n} \sum\limits_{i=1}^n \sum\limits_{l=-L_1}^{L_2} 2^{l} p_l
\leq E_X.
\end{equation}
\end{theorem}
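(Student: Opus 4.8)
The plan is to exploit the least-significant-bit-first decoding described above to decouple the $L_1+L_2+1$ expansion levels into clean binary symmetric channels, and then to invoke a capacity-achieving BSC code (together with Gallager's shaping argument~\cite{Gallager:Information68}) on each level in turn. Concretely, I would argue by induction on the level index $l$ that, conditioned on all lower levels having been decoded correctly, the channel seen at level $l$ is exactly the BSC $Y_{i,l}=X_{i,l}\oplus N_{i,l}$ with crossover probability $q_l$, and that coding over this channel with a Bernoulli$(p_l)$ input achieves the per-level mutual information. Summing these contributions over $l=-L_1,\dots,L_2$ then yields the claimed rate $R_2$.

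For the base case, at $l=-L_1$ there is no incoming carry, so $Y_{i,-L_1}=X_{i,-L_1}\oplus N_{i,-L_1}$ is literally a BSC$(q_{-L_1})$, and by Lemma~\ref{lem:exp} the noise bits $\{N_{i,-L_1}\}_i$ are i.i.d.\ Bernoulli$(q_{-L_1})$. For a Bernoulli$(p_l)$ input and independent Bernoulli$(q_l)$ noise, the output $Y_{i,l}$ is Bernoulli$(p_l\otimes q_l)$ while $H(Y_l\mid X_l)=H(N_l)=H(q_l)$, so the achievable rate per level is
\[
I(X_l;Y_l)=H(Y_l)-H(Y_l\mid X_l)=H(p_l\otimes q_l)-H(q_l).
\]
For the inductive step, once level $l$ has been decoded correctly the receiver knows both $X_{i,l}$ and $Y_{i,l}$, hence recovers the true noise bit $N_{i,l}$ and, together with the (already known) incoming carry $C_{i,l}$, computes the outgoing carry $C_{i,l+1}$ exactly as the carry-out of the full adder on $(X_{i,l},N_{i,l},C_{i,l})$. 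Subtracting $C_{i,l+1}$ from $Y_{i,l+1}$ restores the clean BSC$(q_{l+1})$ at the next level, closing the induction. Summing the per-level rates gives $R_2$, and the energy constraint is enforced exactly as in Theorem~\ref{thm:1} by choosing $p_l\in[0,0.5]$ so that $\frac{1}{n}\sum_{i}\sum_{l}2^{l}p_l\le E_X$.

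The main obstacle I expect is error propagation across levels. Because the carry cancellation at level $l+1$ relies on the decoded codeword at level $l$ being correct, the levels are \emph{not} independent parallel channels, and a single decoding error at a low level can corrupt the carry estimate and hence every higher level. The remedy is a union bound over the finite number $L_1+L_2+1$ of levels: if each constituent BSC code has error probability decaying (e.g.\ exponentially) in $n$, then for fixed $L_1,L_2$ the probability that \emph{any} level is mis-decoded vanishes as $n\to\infty$. Thus the conditioning invoked in the inductive argument holds with high probability, the effective channels reduce to the clean BSCs $Y_{i,l}=X_{i,l}\oplus N_{i,l}$, and $R_2$ is achievable. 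The delicate point to make precise is that $L_1,L_2$ must be held fixed (or grown slowly) relative to $n$ so that the union bound stays summable while the truncated expansion still approximates the continuous channel; this is exactly the regime in which capacity-achieving low-complexity codes such as polar codes~\cite{Arikan:Channel08} make the per-level error decay fast enough to absorb the level count.
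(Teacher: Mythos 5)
Your proposal is correct and follows essentially the same route as the paper: the paper's justification of Theorem~\ref{thm:2} is precisely the least-significant-bit-first successive decoding with carryover cancellation described in the text preceding the theorem, including the observation that the levels are no longer independent and that a strong (exponentially decaying in $n$) per-level code makes the error propagation over the fixed $L_1+L_2+1$ levels negligible. Your write-up merely formalizes this as an induction with an explicit full-adder carry computation and a union bound, which is a faithful elaboration of the paper's argument rather than a different approach.
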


Compared to the previous case, the optimization problem is simpler
here as the rate expression is simply the sum of the rates obtained
from a set of parallel channels.

We now show that the proposed scheme achieves the
capacity of AEN channel in the high SNR regime for a sufficiently
high number of levels. Towards this end, we provide a bound for
the capacity gap, $\Delta C\triangleq C-\hat{C}$, where
\begin{align}
\hat{C}=\frac{\textrm{SNR}}{1+\textrm{SNR}}\sum_{l=-L}^{L+\gamma}
\left[ H(p_l\otimes q_l)-H(q_l) \right]
\end{align}
is the achievable rate given in Theorem~\ref{thm:2}
with $L_1=L$, $L_2=\gamma+L$, and $\gamma=\log(1+\textrm{SNR})$,
when the approximate input distribution discussed above
(i.e., exponential with mean $E_X+E_N$) is used. First, we obtain the asymptotic behavior of entropy at each level.
\begin{lemma}\label{lem:entropy}
Entropy of the Bernoulli random variable at level $l$, $H(q_l)$,
is bounded by
\begin{align}
&H(q_l)< c_1\cdot2^{-l}\quad\;\;\;\;\text{for }\;l\geq 0,\label{lemma:entropy_0}\\
&H(q_l)> 1-c_2\cdot2^l\quad\text{for }\;l\leq 0,\label{lemma:entropy_1}
\end{align}
where $c_1$ and $c_2$ are both constants taking the values $c_1=3\log e$ and $c_2=\log e$.
\end{lemma}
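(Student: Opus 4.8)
The plan is to treat the two regimes $l\ge 0$ and $l\le 0$ separately, since in the former the noise-bit parameter $q_l=\frac{1}{1+e^{2^l}}$ (taking $E_N=1$, i.e.\ $\lambda=1$, as in the normalization of the high-SNR analysis, so that $q_l$ is exactly the parameter from Lemma~\ref{lem:exp}) is doubly-exponentially small, whereas in the latter it is close to $\tfrac12$. I would first record the elementary facts $q_l<e^{-2^l}$ and $\ln(1/q_l)=\ln(1+e^{2^l})\le 2^l+\ln 2$ for $l\ge 0$, together with the exact identity $\tfrac12-q_l=\tfrac12\tanh(2^{l-1})$ valid for all $l$. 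Throughout I would carry the entropy in nats, writing $H(q)=(\log e)\,H_e(q)$ with $H_e(q)=-q\ln q-(1-q)\ln(1-q)$, so that the target constants $c_1=3\log e$ and $c_2=\log e$ reduce to the clean constants $3$ and $1$ for $H_e$.

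For the regime $l\ge 0$ (inequality~\eqref{lemma:entropy_0}) I would use the one-sided bound $H_e(q)\le q\ln(1/q)+q$, which follows from $-(1-q)\ln(1-q)\le q$ (itself a consequence of $-\ln(1-q)\le q/(1-q)$). Substituting the two facts above yields $H_e(q_l)\le e^{-2^l}\big(2^l+1+\ln 2\big)$. Setting $x=2^l\ge 1$, the claim $H_e(q_l)<3\cdot2^{-l}$ becomes the scalar inequality $x e^{-x}(x+1+\ln 2)<3$, which I would verify by splitting into $x^2e^{-x}$ and $(1+\ln 2)\,xe^{-x}$ and bounding each by its global maximum, $4e^{-2}$ and $e^{-1}$ respectively; the resulting sum is below $1.2$, comfortably under $3$. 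Multiplying by $\log e$ recovers~\eqref{lemma:entropy_0}.

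For the regime $l\le 0$ (inequality~\eqref{lemma:entropy_1}) I would expand near the peak of the entropy. With $\delta_l\triangleq\tfrac12-q_l=\tfrac12\tanh(2^{l-1})\le 2^{l-2}$, a second-order Taylor expansion of $H$ about $\tfrac12$ gives $1-H(q_l)=\tfrac12|H''(\xi)|\,\delta_l^2$ for some $\xi\in(\tfrac12-\delta_l,\tfrac12)$, where $|H''(\xi)|=\frac{\log e}{\xi(1-\xi)}$. Since $\xi(1-\xi)$ is smallest at the left endpoint of this interval, $\xi(1-\xi)\ge \tfrac14-\delta_l^2$, and therefore $1-H(q_l)\le \frac{2\log e\,\delta_l^2}{1-4\delta_l^2}$. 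Inserting $\delta_l\le 2^{l-2}$ and using $2^{2l}\le 2^l$ together with $1-4\delta_l^2\ge \tfrac34$ for $l\le 0$ gives $1-H(q_l)\le \frac{\log e}{6}\,2^{2l}\le \log e\cdot 2^l$, which is~\eqref{lemma:entropy_1}.

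The one genuinely delicate point is getting the second-derivative estimate pointed in the correct direction in the $l\le 0$ case: bounding the entropy deficit from above forces an \emph{upper} bound on $|H''|$, i.e.\ a \emph{lower} bound on $\xi(1-\xi)$ over the Taylor interval, which is precisely why the endpoint value $\tfrac14-\delta_l^2$ is the relevant one rather than the maximal $\tfrac14$. Everything else amounts to choosing the regime-appropriate approximation of the binary entropy function and checking two elementary scalar inequalities. The slack in both estimates (the scalar maximum near $1.2$ against the allowed $3$ in the first case, and a factor of roughly $6$ in the second) confirms that the stated constants $c_1=3\log e$ and $c_2=\log e$ are deliberately loose and leave ample room, including at the boundary $l=0$ where each bound is tightest.
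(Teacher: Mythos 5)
Your proof is correct, but it takes a genuinely different route from the paper's. For $l\ge 0$ the paper works directly with the explicit expression $H(q_l)=\frac{1}{1+e^{2^l}}\log\left(1+e^{2^l}\right)+\frac{e^{2^l}}{1+e^{2^l}}\log\left(1+e^{-2^l}\right)$ and bounds each term using the elementary inequalities $\log(1+x)<x\log e$ and $e^x>1+x+x^2/2$, keeping everything as rational functions of $2^l$; your reduction via $H_e(q)\le q\ln(1/q)+q$ to the single scalar inequality $xe^{-x}(x+1+\ln 2)<3$, settled by the global maxima of $x^2e^{-x}$ and $xe^{-x}$, is more modular but ends up with the same flavor of estimate. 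The real divergence is for $l\le 0$: the paper simply rewrites $H(q_l)=\log\left(1+e^{2^l}\right)-\frac{e^{2^l}}{1+e^{2^l}}\,2^l\log e$ and drops terms (using $\log(1+e^{2^l})>1$ and $\frac{e^{2^l}}{1+e^{2^l}}<1$) --- two lines, no calculus --- whereas you run a second-order Taylor expansion of the entropy about $1/2$ with Lagrange remainder, correctly noting the delicate point that one needs a \emph{lower} bound on $\xi(1-\xi)$ over the Taylor interval, hence the endpoint value $\tfrac14-\delta_l^2$. Your argument is more work but yields the structurally stronger estimate $1-H(q_l)\le\frac{\log e}{6}\,2^{2l}$, which is of the correct (quadratic) order in $2^l$, while the paper's linear bound $\log e\cdot 2^l$ is loose; since Theorem~\ref{thm:gap} only needs the linear decay, the paper's shortcut suffices, and your extra precision mainly serves to confirm how much slack the constants $c_1$ and $c_2$ carry.
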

\begin{proof}
Note that,
\begin{align}
H(q_l)  &=-q_l\log q_l-(1-q_l)\log (1-q_l)\nonumber\\
        &=-\frac{1}{1+e^{2^l}}\log \frac{1}{1+e^{2^l}}-\frac{e^{2^l}}{1+e^{2^l}}\log \frac{e^{2^l}}{1+e^{2^l}}.\nonumber
\end{align}
When $l\leq 0$, we obtain a lower bound as
\begin{align}
H(q_l)  &=\log \left(1+e^{2^l}\right) -\frac{e^{2^l}}{1+e^{2^l}}\log e \cdot2^l\nonumber\\
        &>\log(1+1)-\log e\cdot 2^l\nonumber\\
        &=1-\log e \cdot2^l.\nonumber
\end{align}
On the other hand, when $l\geq 0$, by using the facts that $\log (1+x)< \log e\cdot x$ for any $0<x<1$, and $e^x > 1+x+x^2/2$ for any $x>0$, we have
\begin{align}
H(q_l)  &=\frac{1}{1+e^{2^l}}\log \left(1+e^{2^l}\right)+\frac{e^{2^l}}{1+e^{2^l}}\log \left(1+e^{-2^l}\right)\nonumber\\
        &<\frac{1}{1+e^{2^l}}\log\left(2e^{2^l}\right)+ \log \left(1+e^{-2^l}\right)\nonumber\\
        &<\frac{1+2^l\log e}{1+e^{2^l}}+ e^{-2^l}\log e\nonumber\\
        &<\frac{1+2^l\log e}{1+1+2^l+2^{2l}/2}+ \frac{1}{1+2^l}\log e\nonumber\\
        &<2^{-l}2\log e + 2^{-l}\log e\nonumber\\
	&=2^{-l}3\log e .\nonumber
\end{align}
\end{proof}

%%%%%%%%%%
%FigSim 2%
%%%%%%%%%%
\begin{figure}[t]
 \centering
 \includegraphics[width=0.9\columnwidth]{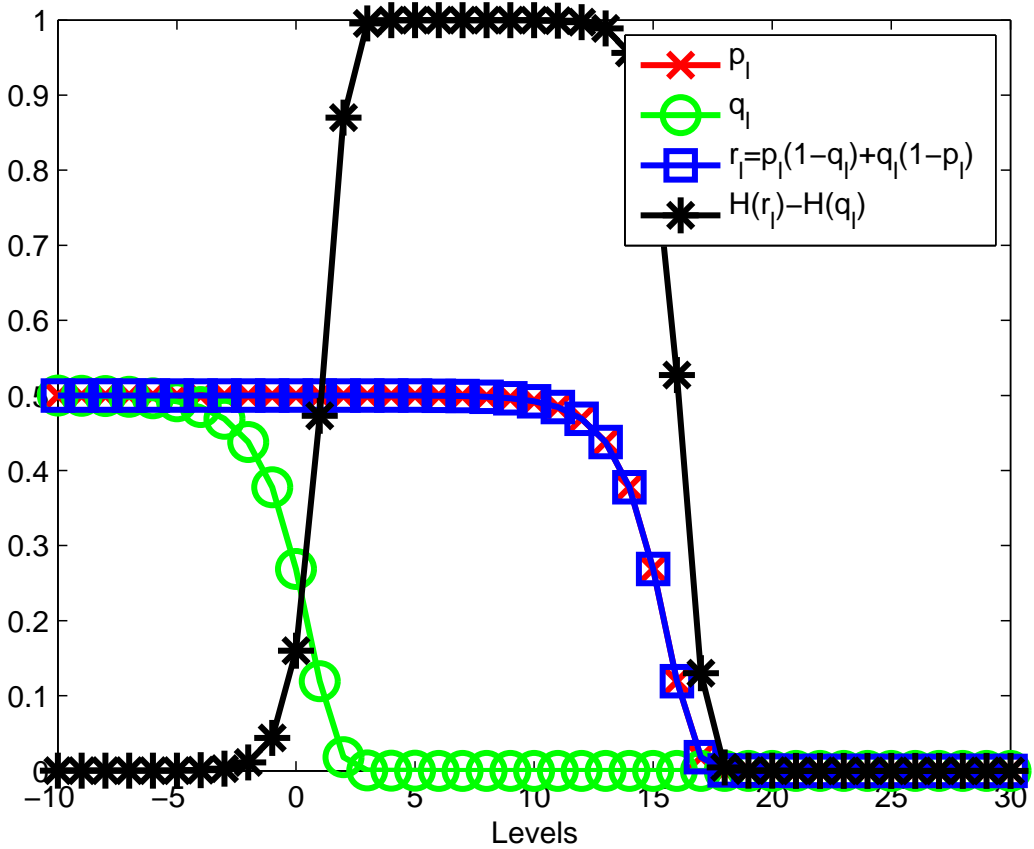}
 \caption{{\bf Signal and noise probabilities, and rate per level.}
$p_l$, $q_l$, $p_l\otimes q_l$ and rate at each level are shown,
where $\gamma=15$. The coding scheme with $L=5$ covers the significant portion of the rate obtained by
using all of the parallel channels. As shown in the text,
$p_l$ is a shifted version of $q_l$.
}
\label{fig:BitRate}
\end{figure}
%%%%%%%%%%

This lemma tells us that the tails bounds are exponential. Although better bounds may
exist, the exponential bound is sufficient for further analysis.
Based on the Lemma above, we obtain the following.
\begin{theorem}\label{thm:gap}
For any $\epsilon>0$, there exists an $\epsilon$-dependent constant
$c=\log\frac{16\log e}{\epsilon},$
such that if $\gamma\geq 2c$ and $L\geq c$, then the capacity
gap is bounded by $\Delta C\leq\epsilon$.
(The total number of levels is given by $2L+\gamma+1$, where
$\gamma=\log(1+\textrm{SNR})$.)
\end{theorem}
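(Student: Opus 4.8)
The plan is to reduce $\Delta C$ to two transparent error terms and control each using the tail estimates of Lemma~\ref{lem:entropy}. First I would record the exact simplifications $C=\gamma$ and $\frac{\textrm{SNR}}{1+\textrm{SNR}}=1-2^{-\gamma}$ (both immediate from $1+\textrm{SNR}=2^{\gamma}$), together with the \emph{shift identity} $p_l=q_{l-\gamma}$: since the signal is drawn from the binary expansion of an exponential with mean $2^{\gamma}E_N$ while the noise has mean $E_N$, Lemma~\ref{lem:exp} gives $p_l=\frac{1}{1+e^{2^{l}/(2^{\gamma}E_N)}}=q_{l-\gamma}$ (normalizing $E_N=1$ as in Lemma~\ref{lem:entropy}). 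Writing $r_l\triangleq H(p_l\otimes q_l)-H(q_l)$ and $T\triangleq\sum_{l=-L}^{L+\gamma}r_l$, these reductions turn the gap into
\begin{equation}
\Delta C=\gamma-(1-2^{-\gamma})T=(\gamma-T)+2^{-\gamma}T,\nonumber
\end{equation}
so it suffices to bound the \emph{truncation gap} $\gamma-T$ and the lower-order \emph{prefactor loss} $2^{-\gamma}T$.

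The heart of the argument is a telescoping lower bound on $T$. Because conditioning on the noise bit cannot increase entropy, $H(p_l\otimes q_l)=H(X_l\oplus N_l)\geq H(X_l)=H(p_l)=H(q_{l-\gamma})$; and because $H(q_\cdot)$ is decreasing in its index (all $q_l<\tfrac12$, with $q_l$ decreasing in $l$) and $\gamma>0$, we have $H(q_{l-\gamma})\geq H(q_l)$. Hence $r_l\geq H(q_{l-\gamma})-H(q_l)\geq 0$, and after summing and reindexing the overlap cancels:
\begin{equation}
T\;\geq\;\sum_{l=-L}^{L+\gamma}\big[H(q_{l-\gamma})-H(q_l)\big]=\sum_{j=-L-\gamma}^{-L-1}H(q_j)-\sum_{l=L+1}^{L+\gamma}H(q_l).\nonumber
\end{equation}
Now I would invoke Lemma~\ref{lem:entropy}: every index of the first block is negative, so $H(q_j)>1-c_2 2^{j}$ and (there are $\gamma$ terms, $\sum_{j\leq -L-1}2^{j}<2^{-L}$) the block exceeds $\gamma-c_2 2^{-L}$; every index of the second block is positive, so $H(q_l)<c_1 2^{-l}$ and the block is below $c_1 2^{-L}$. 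With $c_1=3\log e$, $c_2=\log e$ this yields $\gamma-T<4\log e\cdot 2^{-L}$, which for $L\geq c$ is at most $4\log e\cdot 2^{-c}=\epsilon/4$ by the very definition $c=\log\frac{16\log e}{\epsilon}$.

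For the prefactor loss I would bound $T$ from above by the complementary estimates, again term by term: on $l\leq 0$, $r_l\leq 1-H(q_l)<c_2 2^{l}$; on $0\leq l\leq\gamma$, $r_l\leq 1$; and on $l>\gamma$, subadditivity $H(p_l\otimes q_l)\leq H(p_l)+H(q_l)$ gives $r_l\leq H(p_l)=H(q_{l-\gamma})<c_1 2^{-(l-\gamma)}$. The two geometric tails sum to absolute constants while the middle contributes at most $\gamma+1$, so $T\leq\gamma+1+4\log e$ \emph{uniformly in $L$}. Consequently $2^{-\gamma}T\leq 2^{-\gamma}(\gamma+1+4\log e)$, which tends to zero and is monotone on the admissible range, so the hypothesis $\gamma\geq 2c$ drives it below the remaining $\tfrac34\epsilon$ budget; adding the two contributions gives $\Delta C\leq\epsilon$.

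The main obstacle is the lower bound on $T$: one must resist estimating $r_l$ crudely and instead spot the two structural entropy inequalities---$H(p_l\otimes q_l)\geq H(p_l)$ and the monotonicity of $H(q_\cdot)$---that replace $r_l$ by the telescoping surrogate $H(q_{l-\gamma})-H(q_l)$. Once this surrogate is in place, the sum collapses to two length-$\gamma$ blocks lying entirely in the regimes where Lemma~\ref{lem:entropy} is sharp, and the exponential tails finish the estimate. The only delicate bookkeeping is verifying that the prefactor loss stays inside the leftover budget for every admissible $\gamma$, which is precisely where the stronger requirement $\gamma\geq 2c$ (rather than $\gamma\geq c$) is consumed.
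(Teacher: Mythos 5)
Your proof is correct, and while it is assembled from the same ingredients as the paper's --- the reductions $C=\gamma$ and $\frac{\textrm{SNR}}{1+\textrm{SNR}}=1-2^{-\gamma}$, the shift identity $p_l=q_{l-\gamma}$, the inequality $H(p_l\otimes q_l)\geq H(p_l)$, and the tail bounds of Lemma~\ref{lem:entropy} --- it organizes the central estimate by a genuinely different, and cleaner, decomposition. The paper bounds $\sum_l\left[H(p_l)-H(q_l)\right]$ by splitting the index range into three regimes ($c\leq l\leq \gamma-c$; the indices $-c<l<c$ paired with their shifts by $\gamma$; and the edge regimes) and applying the tail bounds regime by regime; your reindexing telescopes the entire sum in one stroke, leaving exactly two length-$\gamma$ boundary blocks, one with all indices at most $-L-1$ and one with all indices at least $L+1$, where Lemma~\ref{lem:entropy} applies with no case analysis. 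The paper's pairing step in its case (2) is in effect a local instance of your global telescoping, and your version yields the slightly sharper truncation error $4\log e\cdot 2^{-L}$ in place of the paper's $8\log e\cdot 2^{-c}$. The price of your split $\Delta C=(\gamma-T)+2^{-\gamma}T$ is that you must also bound $T$ from \emph{above} ($T\leq\gamma+1+4\log e$), which you do correctly via $r_l\leq 1-H(q_l)$ for $l\leq 0$, $r_l\leq 1$ in the middle band, and subadditivity above level $\gamma$; the paper sidesteps this entirely by substituting its lower bound on the sum into $\gamma-(1-2^{-\gamma})\sum(\cdot)$, which is monotone decreasing in the sum, so only the lower bound is ever needed there. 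Your closing remark is also accurate: both arguments genuinely consume $\gamma\geq 2c$ (yours to reduce $2^{-\gamma}(\gamma+1+4\log e)$ to $2^{-c}$ times an absolute constant, the paper's in the steps $2^{c-\gamma-1}\leq 2^{-c-1}$ and $\gamma 2^{-\gamma}\leq 2c\cdot 2^{-2c}$); under only $\gamma\geq c$ the prefactor loss would be of order $c\cdot 2^{-c}$, which outgrows the budget $\epsilon=16\log e\cdot 2^{-c}$ as $c$ grows. The one shared caveat is the implicit assumption that $\epsilon$ is small enough that $c\geq 0$ (the paper additionally needs $2c>1$ for its monotonicity step); both proofs ignore this trivial edge case, so it is not a gap in yours.
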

\begin{proof}
We first observe that
\begin{align}
\sum_{l=-L}^{L+\gamma}\left[ H(p_l\otimes q_l)-H(q_l)\right]
&=\sum_{l=-L}^{L+\gamma} \left[H(p_l\otimes q_l)-H(p_l)\right]
+\sum_{l=-L}^{L+\gamma} \left[H(p_l)-H(q_l)\right],\nonumber\\
&\geq\sum_{l=-L}^{L+\gamma} \left[H(p_l)-H(q_l)\right],\label{thm:C_r}
\end{align}
where the last inequality is due to the fact that $H(p_l\otimes q_l)\geq H(p_l)$.
Observing that
\begin{align}
p_{l+\gamma}=\frac{1}{1+e^{2^l}}=q_l,\label{thm:pk_nk}
\end{align}
and adopting Lemma~\ref{lem:entropy}, we have
\begin{enumerate}
\item[(1)] $c\leq l\leq \gamma-c$:
$H(p_l)-H(q_l)>1-c_2\cdot 2^{l-\gamma}-c_1\cdot 2^{-l},$

\item[(2)] $-c< l< c$:
$\left[H(p_l)-H(q_l)\right]+[H(p_{l+\gamma})-H(q_{l+\gamma})]=
H(p_l)-H(n_{l+\gamma})$
$>1-c_2\cdot 2^{l-\gamma}-c_1\cdot 2^{-(l+\gamma)},$ and

\item[(3)] $-L\leq l\leq-c$ and $\gamma+c\leq l\leq\gamma+L$:
$H(p_l)-H(q_l)>0.$
\end{enumerate}

Combining these pieces together, we will obtain
\begin{align}
\sum_{l=-L}^{L+\gamma} \left[H(p_l\otimes q_l)-H(p_l)\right]
&>\sum_{l=c}^{\gamma-c} \left(1-c_2\cdot 2^{l-\gamma}-c_1\cdot 2^{-l}\right)
+\sum_{l=-c+1}^{c-1}\left(1-c_2\cdot 2^{l-\gamma}-c_1\cdot 2^{-(l+\gamma)}\right)\nonumber\\
&>\gamma -(c_1+c_2)2^{-c}-(c_1+c_2)2^{c-\gamma-1}\nonumber\\
&>\gamma-8\log e\cdot 2^{-c},\label{thm:term2}
\end{align}
where the last inequality uses the assumption that $\gamma\geq 2c$. Thus, we obtain a bound for the capacity gap

\begin{align}
\Delta C&=\gamma -\frac{2^{\gamma}-1}{2^{\gamma}}\sum_{l=-L}^{L+\gamma} \Bigg\{\left[ H(p_l\otimes q_l)-H(p_l) \right]
+\left[ H(p_l)-H(q_l) \right]\Bigg\}\nonumber\\
        &\leq \gamma-\left(1-2^{-\gamma}\right)(\gamma-8\log e\cdot 2^{-c})\nonumber\\
        &<\gamma 2^{-\gamma}+8\log e\cdot 2^{-c}\nonumber\\
        &\leq2c\cdot 2^{-2c}+8\log e\cdot 2^{-c}\nonumber\\
        &<16\log e\cdot 2^{-c} \nonumber\\
        &=\epsilon,
\end{align}
where we used the decreasing property of
$f(x)=x\cdot 2^{-x}$ for $x>1$, and the assumption that
$\gamma\geq2c$. Fig.~\ref{fig:BitRate} helps explicate the
key steps of the proof.

\end{proof}

%%%%%%%%%%%%%%%%%%%%%%%%%%%%%%%%%%%%%%%%%%%%%%%%%%%%%%%%%%%%%%%%%%%%%%%%%%%%%%
%%%%%%%%%%%%%%%%%%%%%%%%%%%%%%%%%%%%%%%%%%%%%%%%%%%%%%%%%%%%%%%%%%%%%%%%%%%%%%

\subsection{Numerical results}\label{sec:NumRes}

We calculate the rates obtained from the two schemes above
($R_1$ in Theorem~\ref{thm:1} and $R_2$ in Theorem~\ref{thm:2})
with two different input probability distribution choices
(denoted by $C_1$ and $C_2$):
\begin{itemize}
\item $C_1$: Choosing $p_l$ from the binary expansion of
the exponential distribution with mean $E_X+E_N$.
To satisfy the power constraint, we
use coding only for the fraction of the channel uses
(i.e., $E_X/(E_X+E_N)$ of the time).
\item $C_2$: Choosing $p_l$ from the binary expansion of
the exponential distribution with mean $E_X$.
\end{itemize}

The first choice closely resembles the optimal distribution
given in \eqref{eq:OptInput}.
However, as the unused channels vanish in the high SNR
regime, we expect that both choices result in the same rate
as SNR gets large.
Numerical results are given in Fig.~\ref{fig:NumRes}.
It is evident from the figure (and from the analysis
given in Theorem~\ref{thm:gap}) that the proposed technique,
when implemented with sufficiently large number of levels,
outperforms the SNR gaps previously reported in
\cite{Martinez:Communication11} and
\cite{LeGoff:Capacity11}.

%%%%%%%%%%
%FigSim 1%
%%%%%%%%%%
\begin{figure}[t]
 \centering
 \includegraphics[width=0.9\columnwidth]{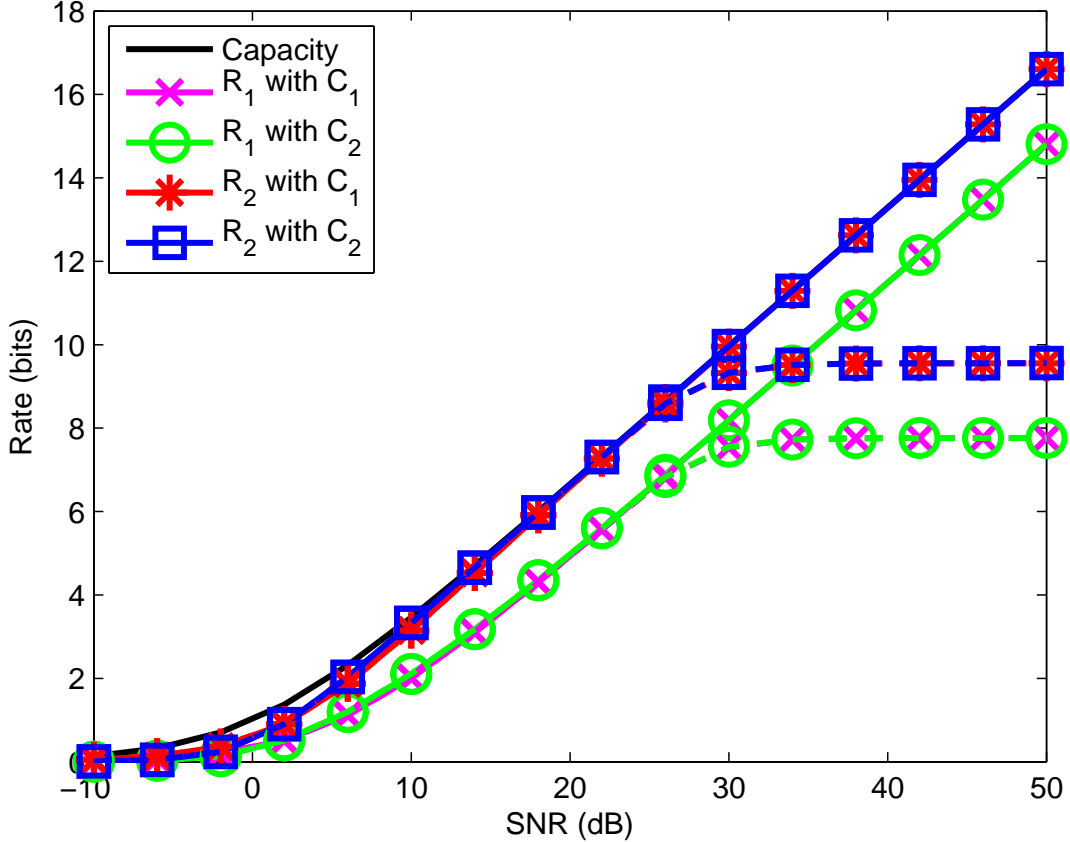}
 \caption{{\bf Numerical results.}
$R_1$: The rate obtained by considering carry over as noise.
$R_2$: The rate obtained by decoding carry overs at each level.
$C_1$: Choosing $p_l$ from the binary expansion of
the exponential distribution with mean $E_X+E_N$.
To satisfy the power constraint,
only a fraction of the channel uses
(i.e., $\frac{E_X}{E_X+E_N}$ of the time) is utilized.
$C_2$: Choosing $p_l$ from the binary expansion of
the exponential distribution with mean $E_X$.
Solid and dotted curves correspond to coding over
$41$ and $21$ number of levels, respectively.}
\label{fig:NumRes}
\end{figure}
%%%%%%%%%%

%%%%%%%%%%%%%%%%%%%%%%%%%%%%%%%%%%%%%%%%%%%%%%%%%%%%%%%%%%%%%%%%%%%%%%%%%%%%%%
%%%%%%%%%%%%%%%%%%%%%%%%%%%%%%%%%%%%%%%%%%%%%%%%%%%%%%%%%%%%%%%%%%%%%%%%%%%%%%

\section{Discussion}\label{sec:Discuss}

We note the followings.

\begin{itemize}
\item Expansion coding allows the construction of good channel codes for
discrete-time continuous channels using good discrete memoryless
channel codes. For instance, one can utilize binary expansion together
with polar codes. The underlying code is $q$-ary polar code, as we need to
implement Gallager's method in constructing the input distribution
$p(x)=\textrm{Ber}(p_l)$ for coding over level $l$.
(See, e.g., \cite{Arikan:Channel08,Sasoglu:Polarization09} for details.)
In addition, expansion modulation can be implemented over a
$q$-ary expansion of the channel, and any good code for the
resulting modulo $q$-sum channel can be used.

\item Avestimehr et al.~\cite{Avestimehr:Wireless11} have
introduced the deterministic approximation approach for
point-to-point and multi-user channels. The basic idea
is to construct an approximate channel for which the
transmitted signals are assumed to be noiseless above the noise level.
Using this high SNR approximation, one only needs to deal with the
interference seen at a particular receiver (in a networked model).
Our expansion coding scheme can be seen as a generalization
of these deterministic approaches. Here, the (effective) noise
in the channel is carefully calculated and the system takes
advantage of coding over the noisy levels at any SNR. This
generalized channel approximation approach can be useful in
reducing the large gaps reported in the previous works.

\item Although our discussion is limited to the AEN channel, where the
proposed scheme outperforms the previously proposed modulation schemes
and performs arbitrarily close to the capacity of the AEN channel, the
expansion coding refers to a more general framework and is not
limited to such channels. Towards this end, our ongoing efforts are focused
on utilizing the proposed scheme for AEN multiple-user channels,
and for their AWGN counterparts.

\end{itemize}

%%%%%%%%%%%%%%%%%%%%%%%%%%%%%%%%%%%%%%%%%%%%%%%%%%%%%%%%%%%%%%%%%%%%%%%%%%%%%%
%%%%%%%%%%%%%%%%%%%%%%%%%%%%%%%%%%%%%%%%%%%%%%%%%%%%%%%%%%%%%%%%%%%%%%%%%%%%%%

\appendices

%%%%%%%%%%%%%%%%%%%%%%%%%%%%%%%%%%%%%%%%%%%%%%%%%%%%%%%%%%%%%%%%%%%%%%%%%%%%%%
%%%%%%%%%%%%%%%%%%%%%%%%%%%%%%%%%%%%%%%%%%%%%%%%%%%%%%%%%%%%%%%%%%%%%%%%%%%%%%

\section{Proof of Lemma~\ref{lem:entropy}}
\label{sec:App-lem:entropy}
Recall that, in our construction, $q_l$ is the probability that Bernoulli random variable takes value 1 at level $l$. Thus, 
\begin{align}
H(q_l)  &=-q_l\log q_l-(1-q_l)\log (1-q_l)\nonumber\\
        &=-\frac{1}{1+e^{2^l}}\log \frac{1}{1+e^{2^l}}-\frac{e^{2^l}}{1+e^{2^l}}\log \frac{e^{2^l}}{1+e^{2^l}}.\nonumber
\end{align}
When $l\leq 0$, we obtain a lower bound as
\begin{align}
H(q_l)  &=\log \left(1+e^{2^l}\right) -\frac{e^{2^l}}{1+e^{2^l}}\log e \cdot2^l\nonumber\\
        &>\log(1+1)-\log e\cdot 2^l\nonumber\\
        &=1-\log e \cdot2^l.\nonumber
\end{align}
On the other hand, when $l\geq 0$, by using the facts that $\log (1+x)< \log e\cdot x$ for any $0<x<1$, and $e^x > 1+x+x^2/2$ for any $x>0$, we have
\begin{align}
H(q_l)  &=\frac{1}{1+e^{2^l}}\log \left(1+e^{2^l}\right)+\frac{e^{2^l}}{1+e^{2^l}}\log \left(1+e^{-2^l}\right)\nonumber\\
        &<\frac{1}{1+e^{2^l}}\log\left(2e^{2^l}\right)+ \log \left(1+e^{-2^l}\right)\nonumber\\
        &<\frac{1+2^l\log e}{1+e^{2^l}}+ e^{-2^l}\log e\nonumber\\
        &<\frac{1+2^l\log e}{1+1+2^l+2^{2l}/2}+ \frac{1}{1+2^l}\log e\nonumber\\
        &<2^{-l}2\log e + 2^{-l}\log e\nonumber\\
        &=2^{-l}3\log e .\nonumber
\end{align}
\noindent

%%%%%%%%%%%%%%%%%%%%%%%%%%%%%%%%%%%%%%%%%%%%%%%%%%%%%%%%%%%%%%%%%%%%%%%%%%%%%%
%%%%%%%%%%%%%%%%%%%%%%%%%%%%%%%%%%%%%%%%%%%%%%%%%%%%%%%%%%%%%%%%%%%%%%%%%%%%%%

%%%%%%%%%%%%%%%%%%%%%%%%%%%%%%%%%%%%%%%%%%%%%%%%%%%%%%%%%%%%%%%%%%%%%%%%%%%%%%
%%%%%%%%%%%%%%%%%%%%%%%%%%%%%%%%%%%%%%%%%%%%%%%%%%%%%%%%%%%%%%%%%%%%%%%%%%%%%%

\end{document}